\newcommand\myeq{\stackrel{\mathclap{\normalfont\mbox{(a)}}}{=}}
\def\underbracex#1#2{\mathop{\vtop{\m@th\ialign{##\crcr
				$\hfil\displaystyle{#2}\hfil$\crcr
				\noalign{\kern3\p@\nointerlineskip}%
				#1\crcr\noalign{\kern3\p@}}}}\limits}
\def\upbracefilla{$\m@th \setbox\z@\hbox{$\braceld$}%
	\bracelu\leaders\vrule \@height\ht\z@ \@depth\z@\hfill 
	\kern\p@\vrule \@width\p@\kern\p@\vrule \@width\p@\kern\p@\vrule \@width\p@
	$}
\def\upbracefillb{$\m@th \setbox\z@\hbox{$\braceld$}%
	\vrule \@width\p@\kern\p@\vrule \@width\p@\kern\p@\vrule \@width\p@\kern\p@
	\leaders\vrule \@height\ht\z@ \@depth\z@\hfill\bracerd
	\braceld\leaders\vrule \@height\ht\z@ \@depth\z@\hfill
	\kern\p@\vrule \@width\p@\kern\p@\vrule \@width\p@\kern\p@\vrule \@width\p@
	$}
\def\upbracefillc{$\m@th \setbox\z@\hbox{$\braceld$}%
	\vrule \@width\p@\kern\p@\vrule \@width\p@\kern\p@\vrule \@width\p@\kern\p@
	\leaders\vrule \@height\ht\z@ \@depth\z@\hfill
	\kern\p@\vrule \@width\p@\kern\p@\vrule \@width\p@\kern\p@\vrule \@width\p@
	$}
\def\upbracefilld{$\m@th \setbox\z@\hbox{$\braceld$}%
	\vrule \@width\p@\kern\p@\vrule \@width\p@\kern\p@\vrule \@width\p@\kern\p@
	\leaders\vrule \@height\ht\z@ \@depth\z@\hfill\braceru$}
\newtheorem{proposition}{Proposition}
\newtheorem*{proof*}{Proof}
\newcommand{\norm}[1]{\left\lVert#1\right\rVert}
\let\origIEEEPARstart\IEEEPARstart
\renewcommand{\IEEEPARstart}[3][1.1]{%
	\def\@IEEEPARstartDROPDEPTH{#1\baselineskip}%
	\origIEEEPARstart{#2}{#3}%
}
\def\endthebibliography{%
	\def\@noitemerr{\@latex@warning{Empty `thebibliography' environment}}%
	\endlist
}
\begin{document}
	\title{Precoding for Uplink RIS-Assisted Cell-Free MIMO-OFDM Systems with Hardware Impairments}
	\vspace{-1cm}
	\author{Navid~Reyhanian,
	Reza Ghaderi Zefreh, Parisa Ramezani, and Emil Bj{\"o}rnson,
	\IEEEmembership{Fellow, IEEE}
	\thanks{N. Reyhanian was with the Department of Electrical Engineering, University of Minnesota, Minneapolis, MN, USA 55455. He is now with Cisco Systems, Milpitas, CA, USA 95035  (email: navid@umn.edu).}
	\thanks{R.G. Zefreh is Independent Researcher, Shiraz, Fars, Iran (email: r.ghaderi@alumni.iut.ac.ir).}
	\thanks{P. Ramezani and E. Bj{\"o}rnson are with the Department of Computer Science, KTH Royal Institute of Technology, Stockholm, Sweden (email: \{parram,emilbjo\}@kth.se).}\vspace{-0.7cm}
}
	
	\allowdisplaybreaks
	
	\maketitle
	
	\begin{abstract}
		This paper studies a reconfigurable intelligent surface (RIS)-assisted cell-free massive multiple-input multiple-output (CF-mMIMO) system with multiple RISs. Joint design of transmit precoding, RIS coefficients, and receive combining is investigated for uplink sum-rate maximization under in-phase and quadrature phase imbalance (IQI) at user equipments (UEs) and access points (APs). A weighted minimum mean squared error (WMMSE) based block coordinate descent (BCD) approach is proposed, where novel iterative methods are developed to efficiently solve the BCD subproblems. The efficiency of proposed approaches is demonstrated relative to classical and heuristic methods via extensive simulations.
	\end{abstract}
	\begin{IEEEkeywords}
		Uplink precoding-combining, multi-RIS, sum-rate, gradient projection method, block coordinate descent.
	\end{IEEEkeywords}
	
	\IEEEpeerreviewmaketitle
	\section{Introduction}
	\IEEEPARstart{C}{ell-free} massive multiple-input multiple-output (CF-mMIMO) is a key technology for the next generation wireless networks as it combines the benefits of mMIMO and ultra-dense networks \cite{10071534}, while mitigating their drawbacks. The main feature of CF-mMIMO is that each user equipment (UE) can be associated with several geographically distributed access-points (APs) and they can simultaneously serve each UE through coherent transmission. In certain variants of CF-mMIMO networks that feature a central processing unit (CPU), fronthaul links facilitate the connection of APs to the CPU through either high-capacity wired or wireless connections. The CPU is responsible for the UE-AP coordination and data processing, enabling advanced precoding methods that achieve higher spectral efficiency (SE) than non-coordinated techniques like maximum ratio transmission (MRT) \cite{8901451}. 
	
Realizing high SE usually demands significant infrastructure. CF-mMIMO systems assisted by low-cost reconfigurable intelligent surface (RIS) techniques are a promising solution, where joint AP beamfoming and RIS reflection design can bring substantial sum rate gains \cite{9656609,9392378,9530675}. In modern wideband systems, low-cost hardware in UEs and APs can introduce in-phase (I) and quadrature-phase (Q) signal imbalances, which can be particularly detrimental to orthogonal frequency-division multiplexing (OFDM) performance. These I/Q imbalances (IQI) complicate precoding and combining processes, making robust mitigation algorithms essential.
	
	Precoding in the presence of IQI has been recently studied \cite{9668928,10000842,10044153,9348619,11114797}. In \cite{9668928}, the hybrid beamforming and its performance in downlink multi-user systems with receiver IQI are studied. The authors of \cite{10000842} proposed a precoding scheme for uplink CF-mMIMO systems with IQI and low-resolution analog-to-digital  converters (ADCs) at receivers. This approach involves  a two-layer decoding and MRT precoding.  A similar receiver IQI issue is addressed in \cite{10044153} in the presence of channel sounding errors arising from the movements of single-antenna UEs in uplink transmissions.  Moreover, \cite{7577860} develops uplink baseband algorithms for mMIMO systems to jointly handle signal distortions from low-precision IQ modulators and low-resolution ADCs by proposing a two-stage IQ imbalance estimation method and new channel estimators and detectors. One limitation of the above-mentioned papers is that they assume ideal transmitters and only consider IQI at receivers. A deep neural network (DNN) based downlink single-carrier hybrid beamforming with different hardware impairments such as IQI and power amplifier nonlinearities is studied in \cite{11114797} for single-user MIMO systems, where the goal is to minimize the mean squared distortion error of transmitted symbols. Due to the black-box nature of the DNN approach in \cite{11114797}, it remains challenging to predict whether or not the proposed method generalizes to the multi-user MIMO-OFDM scenario.

	This paper investigates the maximization of the uplink sum-rate in fully centralized RIS-assisted multi-user CF-mMIMO-OFDM systems. We propose a novel approach involving joint optimization of UE precoding, multi-RIS coefficients, and combining at APs, while considering IQI impairments at both UEs and APs.
	The formulated problem is non-convex and addressed using a novel block coordinate descent (BCD) algorithm based on the weighted minimum mean square error (WMMSE) method. This BCD approach effectively solves subproblems related to precoders at UEs and combiners at APs through bisection-search methods and closed-form solutions, respectively.
	Several reformulations of the subproblem for multi-RIS coefficients are proposed. These reformulations are globally solved using a novel, cost-effective BCD method with closed-form subproblem update rules. The convergence of the proposed methods is theoretically supported, and extensive numerical tests are given to demonstrate their superiority over the classical MMSE method, heuristic methods that solve disjoint problems, and methods that neglect hardware impairments.

	\section{System Model and Problem Formulation}
	The uplink of a CF-mMIMO system is considered. A total of $C$ APs are considered, each equipped with $N_r$ receive antennas. $K$ UEs, each with $N_t$ transmit antennas, are connected to APs through direct uplink channels and potentially through multiple (here $Q$) passive RISs.
	
	Suppose that the available spectrum is divided into $S$ subcarriers where each UE is served through a subset of subcarriers collected in $\mathcal{S}_k$. The set of UEs being served via the  $s^\text{th}$ subcarrier is denoted by $\mathcal{K}_{s}$. The channel between UE $k$ and the $j^{\text{th}}$ RIS, equipped with $M$ elements, on subcarrier $s\in\{-S/2,\dots, -1, 1,\dots, S/2\}$ is denoted by $\mathbf{H}_{j,k}^s\in \mathbb{C}^{ M\times N_t}$. The channel between the $j^{\text{th}}$ RIS and the $c^{\text{th}}$ AP on the $s^\text{th}$ subcarrier is represented by $\mathbf{G}_{c,j}^s\in \mathbb{C}^{ N_r\times M}$. The overall channel from $Q$ RISs to  the $c^{\text{th}}$ AP on the $s^\text{th}$ subcarrier is denoted by $\mathbf{G}_{c}^s= [\mathbf{G}_{c,1}^s, \dots,\mathbf{G}_{c,Q}^s]\in\mathbb{C}^{N_r\times QM}$. Let $\mathbf{R}_{c,k}^s\in \mathbb{C}^{ N_r\times N_t}$ denote the direct uplink channel from UE $k$ to the $c^{\text{th}}$ AP on the $s^\text{th}$ subcarrier. Moreover, we define the concatenated channel to all APs as $\mathbf{R}_{k}^s=[\mathbf{R}_{1,k}^{s^T},\dots,\mathbf{R}_{C,k}^{s^T}]^T\in \mathbb{C}^{ CN_r\times N_t}$.
	
	The amplitude and phase shift of the coefficient for the $m^{\text{th}}$ element of the $j^{\text{th}}$ RIS are represented by $\rho_m^j$ and $\phi_m^j$, respectively, where  $0\leq \rho_m^j\leq 1$ and $-\pi\leq \phi_m^j\leq \pi$ \cite{9459505}. Let $\boldsymbol{\Theta}^j$ denote a diagonal coefficient matrix of the $j^{\text{th}}$ RIS, where the $m^\text{th}$ diagonal element is $[\boldsymbol{\Theta}^j]_{m,m}=\rho_{m}^je^{\iota\phi_m^j}$ ($\iota =\sqrt{-1}$). We assume that each RIS element identically applies phase rotations and amplifications  to all subcarriers \cite{9133184}.
	
	Let $b_k$ denote the number of spatial streams of the $k^{\text{th}}$ UE. The intended symbol vector to UE $k$ is circularly symmetric complex Gaussian variable and denoted by $\mathbf{x}_k^s\in\mathbb{C}^{b_k\times 1}$, satisfying $\mathbb{E}[\mathbf{x}_k^s\mathbf{x}_k^{s^H}]=\mathbf{I}_{b_k\times b_k}$ and $\mathbb{E}[\mathbf{x}_k^s\mathbf{x}_l^{s^H}]=\mathbf{0}_{b_k\times b_k}$, for $l\neq k$. The precoder matrix for UE $k$ on the $s^\text{th}$ subcarrier is denoted by $\mathbf{V}_k^s\in\mathbb{C}^{N_t\times b_k}$. The received signal by the APs on the $s^\text{th}$ subcarrier with ideal hardware is given by concatenating the signal $\mathbf{y}_c^s$ (received by the $c^{\text{th}}$ AP) as $\mathbf{y}^s = (\mathbf{G}^s\boldsymbol{\Theta}\mathbf{H}_{k}^s+\mathbf{R}_k^s)\mathbf{V}_k^s\mathbf{x}_k^s+\sum_{i=1,i\neq k}^{K} (\mathbf{G}^s\boldsymbol{\Theta}\mathbf{H}_{i}^s+\mathbf{R}_i^s)\mathbf{V}_i^s\mathbf{x}_i^s+\mathbf{n}^s$,
	where $\mathbf{y}^s=[\mathbf{y}_{1}^{s^T},\dots,\mathbf{y}_{C}^{s^T}]^T\in\mathbb{C}^{CN_r\times 1}$, $\mathbf{G}^s= [\mathbf{G}_{1}^{s^T},\dots,\mathbf{G}_{C}^{s^T}]^T\in\mathbb{C}^{CN_r\times QM}$, $\boldsymbol{\Theta}=\text{diag}(\{\boldsymbol{\Theta}^j\}_{j=1}^Q)$, $\mathbf{H}_{k}^s=[\mathbf{H}_{1,k}^{s^T},\dots,\mathbf{H}_{Q,k}^{s^T}]^T\in\mathbb{C}^{QM\times N_t}$, and $\mathbf{n}^s=[\mathbf{n}_{1}^{s^T},\dots,\mathbf{n}_{C}^{s^T}]^T\in\mathbb{C}^{CN_r\times 1}$. We consider $\mathbf{n}^s\sim\mathcal{CN}(\mathbf{0},\sigma^2\mathbf{I}_{CN_r\times CN_r})$. The summation term represents the undesired interference for UE $k$ signal.  Signals that are reflected by two or more RISs are ignored, as multi-hop links face large path loss and undergo significant attenuation. Uplink transmissions by each UE on the $s^\text{th}$ subcarrier is limited under a power constraint as $\norm{\mathbf{V}_k^s}_F^2\leq p_k^s$, where $p_k^s$ is the  power budget for UE $k$. 
	
	For simplicity, let $\bar{\mathbf{H}}_k^s$ represent the effective channel from UE $k$ to APs, which equals to $\mathbf{G}^s\boldsymbol{\Theta}\mathbf{H}_{k}^s+\mathbf{R}_k^s$. We assume that the primary source of impairments is IQI at APs and UEs. Here, the received signal at APs from UE $k$ on the $s^\text{th}$ subcarrier, with impairments at both ends, is described as \cite{4024126}:
	\begin{align}
		\mathbf{y}^s =& \mathbf{P}_k^{1,s}\mathbf{V}_k^{s}\mathbf{x}_k^s+\sum_{i\in\mathcal{K}_s,i\neq k} \mathbf{P}_i^{1,s}\mathbf{V}_i^{s}\mathbf{x}_i^s+\sum_{i\in\mathcal{K}_s} \mathbf{P}_i^{2,s}\mathbf{V}_i^{-s^*}\mathbf{x}_i^{-s^*}\nonumber\\
		& +\mathbf{K}_1\mathbf{n}^s+\mathbf{K}_2\mathbf{n}^{s^*},   \label{eq:rec}
	\end{align}
where $\mathbf{P}_k^{1,s} = \mathbf{K}_1\bar{\mathbf{H}}_k^s\mathbf{D}_{1k}+\mathbf{K}_2\bar{\mathbf{H}}_k^{-s^*}\mathbf{D}_{2k}$, $\mathbf{P}_k^{2,s} = \mathbf{K}_2\bar{\mathbf{H}}_k^{{-s}^*}\mathbf{D}_{1k}^{*}+\mathbf{K}_1\bar{\mathbf{H}}_k^{s}\mathbf{D}_{2k}^*$.
In the above equations, $\mathbf{K}_{1c}=(\mathbf{I}_{N_r\times N_r}+\mathbf{L}_{r,c}\odot\text{diag}(e^{-\iota\boldsymbol{\psi}_{r,c}}))/2$, $\mathbf{K}_1=\text{diag}(\{\mathbf{K}_{1c}\}_{c=1}^C)$, $\mathbf{K}_2=\mathbf{I}_{CN_r\times CN_r}-\mathbf{K}_1^*$, $\mathbf{D}_{1k}=(\mathbf{I}_{N_t\times N_t}+\mathbf{L}_{t,k}\odot\text{diag}(e^{\iota\boldsymbol{\psi}_{t,k}}))/2$, $\mathbf{D}_{2k}=\mathbf{I}_{N_t\times N_t}-\mathbf{D}_{1k}^*$, where $\mathbf{L}_{t,k}$ and $\mathbf{L}_{r,c}$ denote real diagonal matrices representing amplitude mismatches at the $k^\text{th}$ transmitter and the $c^\text{th}$ receiver, respectively, the vectors $\boldsymbol{\psi}_{t,k}$ and $\boldsymbol{\psi}_{r,c}$ model the phase mismatches \cite{4024126}. For ideal hardware, we have $\mathbf{L}_{t,k}=\mathbf{I}_{N_t\times N_t}$, $\mathbf{L}_{r,c}=\mathbf{I}_{N_r\times N_r}$, $\boldsymbol{\psi}_{t,k}=\mathbf{0}$, and $\boldsymbol{\psi}_{r,c}=\mathbf{0}$. Furthermore, $\odot$ represents the Hadamard product.
	From \eqref{eq:rec}, it is observed that the received signal on the $s^\text{th}$ subcarrier is affected by cross-talk from signals transmitted over the subcarrier $-s$. This phenomenon complicates the precoding process, as the precoder $\mathbf{V}_k^{s}$ must not only maximize data transmissions for UE $k$ on subcarrier $s$ but also minimize the cross-talk on subcarrier $-s$.
	
	Once the signals from the $k^{\text{th}}$ UE are received at the $c^{\text{th}}$ AP, a combiner matrix $\mathbf{U}_{c,k}^s\in\mathbb{C}^{N_r\times b_k}$ is applied \cite{9737367}. Subsequently, the locally decoded data streams of the UE are routed through fronthaul links to a CPU, where channel estimates, impairments, precoding matrices, and RIS coefficients are available. To simplify the receiver design, $\mathbf{x}_k^s$ is estimated from subcarrier $s$, and the cross-talk term $\mathbf{x}_k^{s^*}$ included in $\mathbf{y}^{-s}$ is treated as uncorrelated noise \cite{10044153}. In the model considered, the CPU recovers the spatial streams of each UE as $\hat{\mathbf{x}}_k^s=\mathbf{U}_{k}^{s^H} \mathbf{y}^s$, where $\mathbf{U}_{k}^s=[\mathbf{U}_{1,k}^{s^T},\dots,\mathbf{U}_{C,k}^{s^T}]^T\in\mathbb{C}^{CN_r\times b_k}$. 
	
	\begin{proposition}
		When signals of interfering UEs and the cross-talk from subcarrier $-s$ are treated as uncorrelated noise, the achievable rate with a fully centralized processing is
		\begin{align}
			\text{SE}_k^s=\log_2\left|\mathbf{I}_{b_k\times b_k}+\mathbf{V}_k^{s^H}\mathbf{P}_k^{1,s^H}  \mathbf{J}_k^{s^{-1}}\mathbf{P}_k^{1,s}\mathbf{V}_k^s\right|,\label{eq:capacity}
		\end{align}
		where $\mathbf{J}_k^s=\sum_{i\in\mathcal{K}_s, i\neq k}  \mathbf{P}_i^{1,s}\mathbf{V}_{i}^s\mathbf{V}_{i}^{s^H}\mathbf{P}_i^{{1,s}^H}+\sigma^2\mathbf{K}_1\mathbf{K}_1^H+\sigma^2\mathbf{K}_2\mathbf{K}_2^H+\sum_{i\in\mathcal{K}_{s}}\mathbf{P}_i^{2,s}\mathbf{V}_i^{-s^*}\mathbf{V}_i^{-s^T}\mathbf{P}_i^{2,s^H}.$
	\end{proposition}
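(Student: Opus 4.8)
The plan is to read \eqref{eq:rec} as a standard MIMO Gaussian vector channel for the desired stream $\mathbf{x}_k^s$, in which every term other than $\mathbf{P}_k^{1,s}\mathbf{V}_k^s\mathbf{x}_k^s$ is lumped into an effective-noise vector, and then to invoke the capacity formula for such a channel together with a determinant identity to obtain the compact form \eqref{eq:capacity}. Concretely, I would write $\mathbf{y}^s=\mathbf{P}_k^{1,s}\mathbf{V}_k^s\mathbf{x}_k^s+\mathbf{z}_k^s$, where $\mathbf{z}_k^s$ collects the co-subcarrier interference $\sum_{i\neq k}\mathbf{P}_i^{1,s}\mathbf{V}_i^s\mathbf{x}_i^s$, the cross-talk $\sum_{i}\mathbf{P}_i^{2,s}\mathbf{V}_i^{-s^*}\mathbf{x}_i^{-s^*}$, and the colored noise $\mathbf{K}_1\mathbf{n}^s+\mathbf{K}_2\mathbf{n}^{s^*}$.

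The first computational step is the covariance $\mathbf{J}_k^s=\mathbb{E}[\mathbf{z}_k^s\mathbf{z}_k^{s^H}]$. Using $\mathbb{E}[\mathbf{x}_i^s\mathbf{x}_i^{s^H}]=\mathbf{I}$, the mutual uncorrelatedness of distinct UEs' streams, the independence of the symbols on subcarriers $s$ and $-s$, and $\mathbb{E}[\mathbf{n}^s\mathbf{n}^{s^H}]=\sigma^2\mathbf{I}$, each of the three groups contributes exactly the matching term in the stated $\mathbf{J}_k^s$: the interference block gives $\sum_{i\neq k}\mathbf{P}_i^{1,s}\mathbf{V}_i^s\mathbf{V}_i^{s^H}\mathbf{P}_i^{1,s^H}$; the cross-talk block gives $\sum_i\mathbf{P}_i^{2,s}\mathbf{V}_i^{-s^*}\mathbf{V}_i^{-s^T}\mathbf{P}_i^{2,s^H}$ after using $(\mathbf{V}_i^{-s^*})^H=\mathbf{V}_i^{-s^T}$ and $\mathbb{E}[\mathbf{x}_i^{-s^*}\mathbf{x}_i^{-s^T}]=\mathbf{I}$; and the noise block gives $\sigma^2(\mathbf{K}_1\mathbf{K}_1^H+\mathbf{K}_2\mathbf{K}_2^H)$ once the cross term is seen to vanish by the circular symmetry of $\mathbf{n}^s$.

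With the channel and noise covariance in hand, I would treat $\mathbf{z}_k^s$ as Gaussian with covariance $\mathbf{J}_k^s$ and apply the mutual information of a Gaussian MIMO channel with Gaussian input $\mathbf{x}_k^s\sim\mathcal{CN}(\mathbf{0},\mathbf{I})$, yielding the rate $\log_2\big(|\mathbf{J}_k^s+\mathbf{P}_k^{1,s}\mathbf{V}_k^s\mathbf{V}_k^{s^H}\mathbf{P}_k^{1,s^H}|/|\mathbf{J}_k^s|\big)$. Factoring $|\mathbf{J}_k^s|$ out and applying $|\mathbf{I}+\mathbf{A}\mathbf{B}|=|\mathbf{I}+\mathbf{B}\mathbf{A}|$ moves the expression into the transmit space $\mathbb{C}^{b_k\times b_k}$ and reproduces \eqref{eq:capacity} exactly.

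The main obstacle is not the algebra but the achievability justification, since the true interference-plus-cross-talk need not be Gaussian. I would argue that replacing $\mathbf{z}_k^s$ by a Gaussian vector of the same covariance can only decrease the mutual information (Gaussian noise is the worst case for a fixed covariance), so the log-det expression is a valid lower bound attained by Gaussian codebooks and is therefore achievable. A secondary point requiring care is the conjugate bookkeeping in the cross-talk covariance, where taking Hermitian transposes of the conjugated symbol and precoder quantities must reproduce the $\mathbf{V}_i^{-s^*}\mathbf{V}_i^{-s^T}$ and $\mathbf{P}_i^{2,s^H}$ factors with the correct (non-conjugated) pairing.
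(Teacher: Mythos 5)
Your proposal is correct and follows essentially the same route as the paper: both treat co-subcarrier interference, cross-talk, and the IQI-filtered noise as an effective noise with covariance $\mathbf{J}_k^s$, write the rate as the Gaussian log-det ratio $\log_2\big(|\mathbf{P}_k^{1,s}\mathbf{V}_k^s\mathbf{V}_k^{s^H}\mathbf{P}_k^{1,s^H}+\mathbf{J}_k^s|/|\mathbf{J}_k^s|\big)$ (the paper phrases this as a differential-entropy difference), and then apply a determinant identity to reach the $b_k\times b_k$ form. If anything, your explicit covariance bookkeeping and the worst-case-Gaussian-noise justification make precise the ``$\geq$'' step that the paper leaves implicit.
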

	\begin{proof}
		The mutual information for UE $k$ on subcarrier $s$ in the presence of cross-talk and interference is as:
		\begin{align}
			&I(\mathbf{x}_{k}^s;\mathbf{y}^s|\{\{\bar{\mathbf{H}}_k^s\}_{s\in\mathcal{S}_k}\}_{k=1}^K)\nonumber\\
			&=h(\mathbf{y}^s|\{\{\bar{\mathbf{H}}_k^s\}_{s\in\mathcal{S}_k}\}_{k=1}^K)-h(\mathbf{y}^s|\mathbf{x}_{k}^s,\{\{\bar{\mathbf{H}}_k^s\}_{s\in\mathcal{S}_k}\}_{k=1}^K)\nonumber\\
			&\geq \frac{1}{2}\log_2((2\pi e)^{CN_r}|\mathbf{P}_k^{1,s}\mathbf{V}_{k}^s\mathbf{V}_{k}^{s^H}\mathbf{P}_k^{{1,s}^H}+\mathbf{J}_k^s|)\nonumber\\
			&-\frac{1}{2}\log_2((2\pi e)^{CN_r}|\mathbf{J}_k^s|)\myeq\log_2\left|\mathbf{I}_{b_k\times b_k}\hspace{-.1cm}+\hspace{-.1cm}\mathbf{V}_k^{s^H}\mathbf{P}_k^{1,s^H}  \mathbf{J}_k^{s^{-1}}\mathbf{P}_k^{1,s}\mathbf{V}_k^s\right|\nonumber
		\end{align}
		where (a) follows from the matrix determinant lemma. This  lower-bound is used as the SE for UE $k$ on subcarrier $s$.
	\end{proof}
	The goal of this paper is to maximize the overall system sum-rate by jointly optimizing the precoding matrices for UEs, and  RIS coefficients as
	\begin{subequations}\label{opt:main}
		\begin{align}
			\max_{\boldsymbol{\Theta},\{\{\mathbf{V}_k^s\}_{s\in\mathcal{S}_k}\}_{k=1}^K} \quad & \sum_{k=1}^{K}\sum_{s\in \mathcal{S}_k}\text{SE}_k^s\label{eq:spece}\\
			\textrm{s.t.} \quad & \norm{\mathbf{V}_k^s}_F^2 \leq p_k^s, \hspace{.3cm}s\in\mathcal{S}_k,\forall k,\label{eq:power}\\
			& |[\boldsymbol{\Theta}^j]_{m,m}|\leq 1, \hspace{.2cm}\forall m, \forall j.\label{eq:ampin}
		\end{align}
	\end{subequations}
	In the above problem, $|[\boldsymbol{\Theta}^j]_{m,m}|\leq 1$ ensures that the reflected signal from each element is not amplified.
	
	\section{The Proposed WMMSE-Based BCD Approach}
	Here, we concentrate on the relationship between the total sum-rate and the mean-square error (MSE) of the optimal combining matrices to reformulate \eqref{opt:main}. We propose a WMMSE-based approach, wherein we minimize a weighted MSE with respect to the aforementioned variables, rather than solving \eqref{opt:main}. First, we derive the MSE matrix for UE $k$  on subcarrier $s$ as below:
	\begin{align}
		&\mathbf{E}_{k}^s =\mathbb{E}[(\hat{\mathbf{x}}_k^s-\mathbf{x}_k^s)(\hat{\mathbf{x}}_k^s-\mathbf{x}_k^s)^H]=\mathbf{I}_{b_k\times b_k}\hspace{-.1cm}-\mathbf{U}_{k}^{s^H}\mathbf{P}_k^{1,s}\mathbf{V}_{k}^s\label{eq:E}\\
		&-\mathbf{V}_{k}^{s^H}\mathbf{P}_k^{1,s^H}\mathbf{U}_{k}^s+\mathbf{U}_{k}^{s^H}\mathbf{P}_k^{1,s}\mathbf{V}_k^s\mathbf{V}_k^{s^H}\mathbf{P}_k^{1,s^H}\mathbf{U}_{k}^{s}+\mathbf{U}_{k}^{s^H}\mathbf{J}_k^s\mathbf{U}_{k}^s.\nonumber
	\end{align}
	The above expectation is with respect to $\{\mathbf{x}_k^s\}_{k=1}^K$ and $\mathbf{n}^s$. Based on the WMMSE approach \cite{5756489}, instead of solving \eqref{opt:main}, a positive semi-definite (PSD) weight matrix $\mathbf{W}_{k}^s$ is applied to $\mathbf{E}_{k}^s$, and the following problem is formulated:
\begin{subequations}\label{opt:wmmse_v3}
	\begin{align}
		\min_{\substack{\boldsymbol{\Theta},\{\{\mathbf{U}_k^s,\mathbf{V}_k^s, \\ \mathbf{W}_k^s \succeq \mathbf{0}\}_{s\in \mathcal{S}_k}\}_{k=1}^K}} & \sum_{k=1}^{K}\sum_{s\in \mathcal{S}_k} \! \mathrm{Tr}(\mathbf{W}_{k}^s\mathbf{E}_{k}^s)\!-\!\log_2(|\mathbf{W}_{k}^s|)\label{eq:wmmse_v3} \\[-1.5ex] 
		\textrm{s.t.} \quad & \eqref{eq:power}-\eqref{eq:ampin}.
	\end{align}
\end{subequations}
	To solve the above problem, we minimize \eqref{opt:wmmse_v3} alternatively with respect to different blocks via a BCD approach.

	\subsection{The Subproblem with Respect to $\mathbf{U}_k^s$} 
	The problem with respect to $\mathbf{U}_k^s$ is unconstrained. The optimal solution for $\mathbf{U}_k^s$ is determined by the first order optimality condition. Hence, each $\mathbf{U}_k^s$ iterate is updated as $\mathbf{U}_k^s = \Big( \mathbf{P}_k^{1,s}\mathbf{V}_k^s\mathbf{V}_k^{s^H}\mathbf{P}_k^{{1,s}^H}+\mathbf{J}_k^s\Big)^{-1}\mathbf{P}_k^{1,s}\mathbf{V}_k^s$.
	
	\subsection{The Subproblem with Respect to $\mathbf{W}_k^s$}
Replacing the optimal $\mathbf{U}_k^s$ in \eqref{eq:E}, the MSE is obtained as $\mathbf{E}_{k}^{s,o} =\mathbf{I}_{b_k\times b_k}-\mathbf{V}_{k}^{s^H}\mathbf{P}_k^{1,s^H}\Big( \mathbf{P}_k^{1,s}\mathbf{V}_k^s\mathbf{V}_k^{s^H}\mathbf{P}_k^{{1,s}^H}+\mathbf{J}_k^s\Big)^{-1}\mathbf{P}_k^{1,s}\mathbf{V}_k^s$.
	We substitute $\mathbf{E}_{k}^{s,o}$ into \eqref{eq:wmmse_v3} and obtain $\mathbf{W}_{k}^s$ from the first order optimality condition as $\mathbf{W}_k^s=(\mathbf{E}_{k}^{s,o})^{-1} = (\mathbf{I}_{b_k\times b_k}-\mathbf{U}_k^{s^H}\mathbf{P}_k^{1,s}\mathbf{V}_k^{s})^{-1}$. 
	\subsection{The Subproblem with Respect to $\mathbf{V}_k^s$}
Observe that \eqref{eq:wmmse_v3} is quadratic and convex with respect to $\mathbf{V}_k^s$. Furthermore, there is one maximum power budget constraint per-subcarrier per UE. Dualizing the power constraint, the Lagrangian is expressed as  $\mathcal{L}(\mathbf{V}_{k}^s, \vartheta_k^s) =\sum_{i\in\mathcal{K}_s }\text{Tr}(\mathbf{W}_i^s\mathbf{U}_{i}^{s^H}\mathbf{P}_k^{1,s}\mathbf{V}_k^s\mathbf{V}_k^{s^H}\mathbf{P}_k^{1,s^H}\mathbf{U}_{i}^s)+\vartheta_k^s(\norm{\mathbf{V}_k^s}_F^2 - p_k^s)- \text{Tr}(\mathbf{W}_{k}^s\mathbf{U}_{k}^{s^H}\mathbf{P}_k^{1,s}\mathbf{V}_{k}^s)-\text{Tr}(\mathbf{W}_{k}^s\mathbf{V}_{k}^{s^H}\mathbf{P}_k^{1,s^H}\mathbf{U}_{k}^s)+\sum_{i\in\mathcal{K}_{-s} }\text{Tr}(\mathbf{W}_i^{-s}\mathbf{U}_{i}^{-s^H}\mathbf{P}_k^{2,-s}\mathbf{V}_k^{s^*}\mathbf{V}_k^{s^T}\mathbf{P}_k^{2,-s^H}\mathbf{U}_{i}^{-s})$.
	One can obtain the minimizer of $\mathbf{V}_k^s$ using a bisection-search method on the Lagrange multiplier, $\vartheta_k^s$, associated with the power constraint over the positive orthant and having the KKT conditions satisfied:
	\begin{subequations}\label{eq:V}
		\begin{align}
			&\mathbf{V}_k^s = \Big(\sum_{i\in\mathcal{K}_s }\mathbf{P}_k^{1,s^H}\mathbf{U}_i^s\mathbf{W}_i^s\mathbf{U}_{i}^{s^H}\mathbf{P}_k^{1,s}+\sum_{i\in\mathcal{K}_{-s} }\mathbf{P}_k^{2,-s^T}\mathbf{U}_{i}^{-s^*}\nonumber\\
			&\mathbf{W}_{i}^{-s^T}\mathbf{U}_{i}^{-s^T}\mathbf{P}_k^{2,-s^*}+\vartheta_k^s\mathbf{I}_{N_t\times N_t}\Big)^{-1}\mathbf{P}_k^{1,s^H}\mathbf{U}_{k}^s\mathbf{W}_{k}^s,\label{eq:V1}\\
			&\norm{\mathbf{V}_k^s}_F^2 \leq p_k^s, \vartheta_k^s(\norm{\mathbf{V}_k^s}_F^2 - p_k^s)=0, \vartheta_k^s\geq 0. \label{eq:V2}
		\end{align}
	\end{subequations}
 If no positive $\vartheta_k^s$ satisfies the above condition, one can set $\vartheta_k^s=0$ and obtain $\mathbf{V}_k^s$ from \eqref{eq:V1}.
	\vspace{-.3cm}
	
	\allowdisplaybreaks

	\subsection{The Subproblem with Respect to $\boldsymbol{\Theta}$}
	This section proposes a scalable and low-cost method to minimize the objective function $\sum_{k=1}^{K}\sum_{s\in \mathcal{S}_k}\text{Tr}(\mathbf{W}_{k}^s\mathbf{E}_{k}^s)$ with respect to $\boldsymbol{\Theta}$, while guaranteeing global optimality. 
	Using \eqref{eq:E}, we expand $\sum_{k=1}^{K}\sum_{s\in \mathcal{S}_k}\text{Tr}(\mathbf{W}_{k}^s\mathbf{E}_{k}^s)$. We note that $\boldsymbol{\Theta}$ is included in $\mathbf{P}_k^{1,s}$ and $\mathbf{P}_k^{2,s}$ in \eqref{eq:E}.  We  
	keep those terms of \eqref{eq:E} that include $\boldsymbol{\Theta}$ and do simplifications as \vspace{-.2cm}
\begin{align}
\hspace{-1cm} \	\Pi(\boldsymbol{\Theta}) = \sum_{k=1}^{K}\sum_{s\in \mathcal{S}_k} & \!\begin{multlined}[t]
		\bigg[ 2 \mathfrak{Re}\left(\mathrm{Tr}(\boldsymbol{\Theta}\boldsymbol{\Phi}_s)\right) \\
		\hspace{-2cm}+ \mathrm{Tr}\bigg(\sum_{i=1}^{2} \Big[ \boldsymbol{\Theta}^H\boldsymbol{\Upsilon}_{hs,k}^{i,s} \boldsymbol{\Theta}\boldsymbol{\Gamma}_{hs}^{i,s} + \boldsymbol{\Theta}^T\boldsymbol{\Upsilon}_{ts,k}^{i,s}\noindent \boldsymbol{\Theta}\boldsymbol{\Gamma}_{ts}^{i,s} \\ \noindent
		\hspace{-3cm}+ \boldsymbol{\Theta}^H\boldsymbol{\Upsilon}_{hc,k}^{i,s} \boldsymbol{\Theta}^*\boldsymbol{\Gamma}_{hc}^{i,s} + \boldsymbol{\Theta}^T\boldsymbol{\Upsilon}_{tc,k}^{i,s} \boldsymbol{\Theta}^*\boldsymbol{\Gamma}_{tc}^{i,s} \Big] \bigg) \bigg],
	\end{multlined} \label{eq:obj2}
\end{align}
where 
\begin{align}
	\boldsymbol{\Upsilon}_{hs,k}^{1,s} &= \boldsymbol{\Upsilon}_{hs,k}^{2,s} = \mathbf{G}^{s^H} \mathbf{K}_{1}^H {\mathbf{U}_{k}^{s}} \mathbf{W}_{k}^s {\mathbf{U}_{k}^{s^H}} \mathbf{K}_{1} \mathbf{G}^{s}, \label{eq:upsilon1} \\
	\boldsymbol{\Upsilon}_{hc,k}^{1,s} &= \boldsymbol{\Upsilon}_{hc,k}^{2,s} = \mathbf{G}^{s^H} \mathbf{K}_{1}^H {\mathbf{U}_{k}^{s}} \mathbf{W}_{k}^s {\mathbf{U}_{k}^{s^H}} \mathbf{K}_{2} \mathbf{G}^{-s^*}, \label{eq:upsilon2} \\
	\boldsymbol{\Upsilon}_{ts,k}^{1,s} &= \boldsymbol{\Upsilon}_{ts,k}^{2,s} = \mathbf{G}^{-s^T} \mathbf{K}_{2}^H {\mathbf{U}_{k}^{s}} \mathbf{W}_{k}^s {\mathbf{U}_{k}^{s^H}} \mathbf{K}_{1} \mathbf{G}^{s}, \label{eq:upsilon3} \\
	\boldsymbol{\Upsilon}_{tc,k}^{1,s} &= \boldsymbol{\Upsilon}_{tc,k}^{2,s} = \mathbf{G}^{-s^T} \mathbf{K}_2^{H} \mathbf{U}_k^{s} \mathbf{W}_{k}^s \mathbf{U}_k^{s^H} \mathbf{K}_2 \mathbf{G}^{-s^*},\label{eq:upsilon4}
\end{align}
and  $\boldsymbol{\Gamma}_{hs}^{1,s}=\sum_{i=1}^{K}\mathbf{H}_{i}^{s} \mathbf{D}_{1i} \mathbf{V}_{i}^s \mathbf{V}_{i}^{s^H} \mathbf{D}_{1i}^{H} \mathbf{H}_{i}^{s^H}$, 
$\boldsymbol{\Gamma}_{hs}^{2,s}=\sum_{i=1}^{K}\mathbf{H}_{i}^{s} \mathbf{D}_{2i}^* \mathbf{V}_{i}^{-s^*} \mathbf{V}_{i}^{-s^T} \mathbf{D}_{2i}^{T} \mathbf{H}_{i}^{s^H}$, 
$\boldsymbol{\Gamma}_{hc}^{1,s}=\sum_{i=1}^{K} \mathbf{H}_{i}^{-s^*} \mathbf{D}_{2i} \mathbf{V}_{i}^s \mathbf{V}_{i}^{s H} \mathbf{D}_{1i}^{H} \mathbf{H}_{i}^{s^H}$, 
$\boldsymbol{\Gamma}_{hc}^{2,s}=\sum_{i=1}^{K} \mathbf{H}_{i}^{-s^*} \mathbf{D}_{1i}^* \mathbf{V}_{i}^{-s^*} \mathbf{V}_{i}^{-s^T} \mathbf{D}_{2i}^{T} \mathbf{H}_{i}^{s^H}$, 
$\boldsymbol{\Gamma}_{ts}^{1,s}=\sum_{i=1}^{K}  \mathbf{H}_{i}^{s} \mathbf{D}_{1k} \mathbf{V}_{k}^{s} \mathbf{V}_{k}^{s ^H} \mathbf{D}_{2k}^H \mathbf{H}_{k}^{-s^T}$, 
$\boldsymbol{\Gamma}_{ts}^{2,s}=\sum_{i=1}^{K}  \mathbf{H}_{i}^{s} \mathbf{D}_{2k}^* \mathbf{V}_{i}^{-s^*} \mathbf{V}_{i}^{-s^T} \mathbf{D}_{1k}^{T} \mathbf{H}_{i}^{-s^T}$, 
$\boldsymbol{\Gamma}_{tc}^{1,s}=\sum_{i=1}^{K} \mathbf{H}_{i}^{-s^*} \mathbf{D}_{2i} \mathbf{V}_{i}^s \mathbf{V}_{i}^{s^H} \mathbf{D}_{2i}^{H} \mathbf{H}_{i}^{-s^T}$, 
and $\boldsymbol{\Gamma}_{tc}^{2,s}=\sum_{i=1}^{K}  \mathbf{H}_{i}^{-s^*} \mathbf{D}_{1i}^* \mathbf{V}_{i}^{-s^*} \mathbf{V}_{i}^{-s^T} \mathbf{D}_{1i}^{T} \mathbf{H}_{i}^{-s^T}$.
	Furthermore, $\boldsymbol{\Phi}_s$ is as
	\begin{align}
		&\boldsymbol{\Phi}_s=  \mathbf{H}_{k}^{s} \mathbf{D}_{1k} \mathbf{V}_{k}^s \mathbf{V}_{k}^{s^H} 
		\mathbf{D}_{1k}^{H} \mathbf{R}_{k}^{s^H} \mathbf{K}_1^{H}  \mathbf{U}_{k}^{s}  \mathbf{W}_{k}^{s}  \mathbf{U}_{k}^{s^H} \mathbf{K}_1 \mathbf{G}^{s}\nonumber\\
		&+\mathbf{H}_{k}^{s} \mathbf{D}_{1k} \mathbf{V}_{k}^s \mathbf{V}_{k}^{s^H} 
		\mathbf{D}_{2k}^{H} \mathbf{R}_{k}^{-s^T} \mathbf{K}_2^{H}  \mathbf{U}_{k}^{s}  \mathbf{W}_{k}^{s}  \mathbf{U}_{k}^{s^H} \mathbf{K}_1 \mathbf{G}^{s}\nonumber\\
		&+\mathbf{G}^{-s^T} \mathbf{K}_2^{H} \mathbf{U}_{k}^{s} \mathbf{W}_{k}^{s}  \mathbf{U}_{k}^{s^H}
		\mathbf{K}_1 \mathbf{R}_{k}^{s} \mathbf{D}_{1k} \mathbf{V}_{k}^s \mathbf{V}_{k}^{s^H} \mathbf{D}_{2k}^{H}  \mathbf{H}_{k}^{-s^T}\nonumber\\
		&+\mathbf{G}^{-s^T} \mathbf{K}_2^{H} \mathbf{U}_{k}^{s} \mathbf{W}_{k}^{s}  \mathbf{U}_{k}^{s^H}
		\mathbf{K}_2 \mathbf{R}_{k}^{-s^*} \mathbf{D}_{2k} \mathbf{V}_{k}^s \mathbf{V}_{k}^{s^H} \mathbf{D}_{2k}^{H}  \mathbf{H}_{k}^{-s^T}\nonumber\\
		&+ \mathbf{H}_{k}^{s} \mathbf{D}_{2k}^{*} \mathbf{V}_{k}^{-s^*} \mathbf{V}_{k}^{-s^T} 
		\mathbf{D}_{2k}^{T} \mathbf{R}_{k}^{s^H} \mathbf{K}_1^{H}  \mathbf{U}_{k}^{s}  \mathbf{W}_{k}^{s}  \mathbf{U}_{k}^{s^H} \mathbf{K}_1 \mathbf{G}^{s}\nonumber\\
		&+ \mathbf{H}_{k}^{s} \mathbf{D}_{2k}^{*} \mathbf{V}_{k}^{-s^*} \mathbf{V}_{k}^{-s^T} 
		\mathbf{D}_{1k}^{T} \mathbf{R}_{k}^{-s^T} \mathbf{K}_2^{H}  \mathbf{U}_{k}^{s}  \mathbf{W}_{k}^{s}  \mathbf{U}_{k}^{s^H} \mathbf{K}_1 \mathbf{G}^{s}\nonumber\\
		&+ \mathbf{G}^{-s^T} \mathbf{K}_2^{H} \mathbf{U}_{k}^{s} \mathbf{W}_{k}^{s}  \mathbf{U}_{k}^{s^H}
		\mathbf{K}_1 \mathbf{R}_{k}^{s} \mathbf{D}_{2k}^* \mathbf{V}_{k}^{-s^*} \mathbf{V}_{k}^{-s^T} \mathbf{D}_{1k}^{T}  \mathbf{H}_{k}^{-s^T}\nonumber\\
		&+ \mathbf{G}^{-s^T} \mathbf{K}_2^{H} \mathbf{U}_{k}^{s} \mathbf{W}_{k}^{s}  \mathbf{U}_{k}^{s^H}
		\mathbf{K}_2 \mathbf{R}_{k}^{-s^*} \mathbf{D}_{1k}^* \mathbf{V}_{k}^{-s^*} \mathbf{V}_{k}^{-s^T} \mathbf{D}_{1k}^{T}  \mathbf{H}_{k}^{-s^T}
		\nonumber\\
		&- \mathbf{H}_{k}^{s} \mathbf{D}_{1k} \mathbf{V}_{k}^{s} \mathbf{W}_{k}^{s}  \mathbf{U}_{k}^{s^H} \mathbf{K}_1 \mathbf{G}^{s} -\mathbf{G}^{-s^T} \mathbf{K}_2^H\mathbf{U}_{k}^{s}\mathbf{W}_{k}^{s}\mathbf{V}_{k}^{s^H}\mathbf{D}_{2k}^H \mathbf{H}_{k}^{-s^T}.
		\nonumber
	\end{align}
	
	\begin{proposition}
		The subproblem with respect to $\boldsymbol{\Theta}$ is convex.
	\end{proposition}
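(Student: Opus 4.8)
The plan is to avoid differentiating the explicit quadratic form $\Pi(\boldsymbol{\Theta})$ in \eqref{eq:obj2} and instead exploit the probabilistic origin of the objective. The starting observation is that, with the combiners $\mathbf{U}_k^s$, precoders $\mathbf{V}_k^s$, and weights $\mathbf{W}_k^s$ held fixed, the recovered stream $\hat{\mathbf{x}}_k^s=\mathbf{U}_k^{s^H}\mathbf{y}^s$ depends on $\boldsymbol{\Theta}$ only through the effective channels inside $\mathbf{P}_i^{1,s}$ and $\mathbf{P}_i^{2,s}$. From the definitions, $\bar{\mathbf{H}}_k^s=\mathbf{G}^s\boldsymbol{\Theta}\mathbf{H}_k^s+\mathbf{R}_k^s$ is linear in $\boldsymbol{\Theta}$, whereas $\bar{\mathbf{H}}_k^{-s^*}=\mathbf{G}^{-s^*}\boldsymbol{\Theta}^*\mathbf{H}_k^{-s^*}+\mathbf{R}_k^{-s^*}$ is linear in $\boldsymbol{\Theta}^*$. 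Hence each $\mathbf{P}_i^{1,s}$ and $\mathbf{P}_i^{2,s}$ --- and therefore $\mathbf{y}^s$ in \eqref{eq:rec} --- is a real-affine function of $\boldsymbol{\Theta}$, that is, affine in the real and imaginary parts of the diagonal entries $[\boldsymbol{\Theta}^j]_{m,m}$.

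First I would make this structure precise by identifying $\boldsymbol{\Theta}$ with the stacked real vector $\tilde{\boldsymbol{\theta}}=[\mathfrak{Re}(\boldsymbol{\theta})^T,\mathfrak{Im}(\boldsymbol{\theta})^T]^T$, where $\boldsymbol{\theta}$ collects the diagonal coefficients of $\boldsymbol{\Theta}$, and writing $\hat{\mathbf{x}}_k^s-\mathbf{x}_k^s=\mathbf{A}_k^s\tilde{\boldsymbol{\theta}}+\mathbf{c}_k^s$ for a matrix $\mathbf{A}_k^s$ and vector $\mathbf{c}_k^s$ that depend on the random symbols and noise but not on $\boldsymbol{\Theta}$. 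Next I would invoke the identity $\mathrm{Tr}(\mathbf{W}_k^s\mathbf{E}_k^s)=\mathbb{E}\big[\,\norm{(\mathbf{W}_k^s)^{1/2}(\hat{\mathbf{x}}_k^s-\mathbf{x}_k^s)}^2\,\big]$, which holds because $\mathbf{W}_k^s\succeq\mathbf{0}$ admits a Hermitian square root $(\mathbf{W}_k^s)^{1/2}$ and because of the cyclic property of the trace. For each realization, the integrand is the squared weighted norm of an affine function of $\tilde{\boldsymbol{\theta}}$, hence a convex quadratic in $\tilde{\boldsymbol{\theta}}$; taking the expectation and then summing over $k$ and $s\in\mathcal{S}_k$ preserves convexity. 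Since all $\boldsymbol{\Theta}$-independent terms of $\sum_{k,s}\mathrm{Tr}(\mathbf{W}_k^s\mathbf{E}_k^s)$ are constants, the retained part $\Pi(\boldsymbol{\Theta})$ inherits this convexity.

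To close the argument I would verify that the feasible set is convex: each constraint $|[\boldsymbol{\Theta}^j]_{m,m}|\le 1$ describes a closed unit disk in $\mathbb{C}$, so the feasible region is the Cartesian product of these disks, which is convex. Minimizing a convex objective over a convex set therefore yields a convex program.

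The main obstacle is conceptual rather than computational: because $\boldsymbol{\Theta}$ enters both directly and through its conjugate, the dependence is not holomorphic, and convexity must be understood with respect to $\tilde{\boldsymbol{\theta}}\in\mathbb{R}^{2QM}$ rather than $\boldsymbol{\theta}\in\mathbb{C}^{QM}$. A direct route through \eqref{eq:obj2} would instead require showing that the holomorphic terms $\mathrm{Tr}(\boldsymbol{\Theta}^T\boldsymbol{\Upsilon}\boldsymbol{\Theta}\boldsymbol{\Gamma})$ and the anti-holomorphic terms $\mathrm{Tr}(\boldsymbol{\Theta}^H\boldsymbol{\Upsilon}\boldsymbol{\Theta}^*\boldsymbol{\Gamma})$ combine with the Hermitian terms into a positive semidefinite real Hessian, a delicate computation because these pseudo-quadratic parts contribute mixed $\mathfrak{Re}/\mathfrak{Im}$ blocks. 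The MSE-based argument sidesteps this entirely, since positive semidefiniteness is automatic from the squared-norm representation; I would then only need to confirm that no cross terms beyond the four families in \eqref{eq:obj2} were dropped in passing from $\mathbf{E}_k^s$ to $\Pi(\boldsymbol{\Theta})$.
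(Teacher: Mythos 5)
Your proof is correct, and it takes a genuinely different route from the paper's. The paper works directly on the expanded quadratic form \eqref{eq:obj2}: it computes the complex second derivative $\partial^2\Pi(\boldsymbol{\Theta})/\partial\boldsymbol{\Theta}^*\partial\boldsymbol{\Theta}$ as a sum of (transposed) products of the factors $\boldsymbol{\Upsilon}$ and $\boldsymbol{\Gamma}$, invokes positive semidefiniteness of products and sums of Hermitian PSD matrices, and then notes, as you do, that \eqref{eq:ampin} describes a convex set. You instead go one step upstream of $\Pi$: since $\bar{\mathbf{H}}_k^s$ is linear in $\boldsymbol{\Theta}$ and $\bar{\mathbf{H}}_k^{-s^*}$ is linear in $\boldsymbol{\Theta}^*$, the error $\hat{\mathbf{x}}_k^s-\mathbf{x}_k^s$ is, realization by realization, affine in $\tilde{\boldsymbol{\theta}}=[\mathfrak{Re}(\boldsymbol{\theta})^T,\mathfrak{Im}(\boldsymbol{\theta})^T]^T$, so $\mathrm{Tr}(\mathbf{W}_k^s\mathbf{E}_k^s)=\mathbb{E}\big[\norm{(\mathbf{W}_k^s)^{1/2}(\hat{\mathbf{x}}_k^s-\mathbf{x}_k^s)}^2\big]$ is an expectation of squared norms of affine maps, hence convex in $\tilde{\boldsymbol{\theta}}$, and $\Pi$ differs from the sum of these terms only by $\boldsymbol{\Theta}$-independent constants (this holds by the paper's own construction of $\Pi$, so your closing caveat about dropped cross terms is automatically discharged). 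What your route buys is robustness: positive semidefiniteness of the real Hessian is automatic from the squared-norm representation, so you never touch the delicate points in a direct Hessian computation --- namely that a product of two noncommuting Hermitian PSD matrices is not itself Hermitian (it is only similar to a PSD matrix), and that the non-Hermitian terms $\boldsymbol{\Theta}^T\boldsymbol{\Upsilon}\boldsymbol{\Theta}\boldsymbol{\Gamma}$ and $\boldsymbol{\Theta}^H\boldsymbol{\Upsilon}\boldsymbol{\Theta}^*\boldsymbol{\Gamma}$ contribute mixed $\mathfrak{Re}/\mathfrak{Im}$ blocks that a complete real-Hessian verification must account for, whereas the paper's stated Hessian lists only the $hs$ and $tc$ families. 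Your argument is also more general: it shows convexity of any PSD-weighted MSE in any parameter entering the received signal affinely. What the paper's route buys in exchange is the explicit quadratic structure of $\Pi$ in $(\boldsymbol{\Theta},\boldsymbol{\Theta}^*)$, which is what gets reused immediately afterward to form \eqref{opt:diag_compact} and the real-valued QCQP solved by the proposed BCD; your structural argument certifies convexity but does not by itself produce those curvature matrices.
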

	\begin{proof}
	The second derivative of \eqref{eq:obj2} is obtained as $\frac{\partial^2 \Pi(\boldsymbol{\Theta})}{\partial \boldsymbol{\Theta}^* \partial \boldsymbol{\Theta}}= \sum_{i=1}^{2}\left(\boldsymbol{\Upsilon}_{hs,k}^{i,s}\right)^T \left(\boldsymbol{\Gamma}_{hs}^{i,s}\right)^T + \left( \boldsymbol{\Gamma}_{tc}^{i,s} \right)^T \left( \boldsymbol{\Upsilon}_{tc,k}^{i,s}  \right)^T$. We note that the product of two Hermitian PSD matrices is also PSD. Additionally, the sum of PSD matrices is a PSD matrix. The constraint \eqref{eq:ampin} is convex because the complex modulus function is a convex function (being a norm), and the sublevel set of a convex function is always convex. Positive semi-definiteness of the above matrix and convexity of the constraint \eqref{eq:ampin} ensure the convexity with respect to $\boldsymbol{\Theta}$.
	\end{proof}
	
	If $\boldsymbol{\theta}$ is an arbitrary vector and $\boldsymbol{\Theta}=\text{diag}(\boldsymbol{\theta})$, we have two useful equalities $\text{Tr}(\boldsymbol{\Theta}^H\mathbf{A}\boldsymbol{\Theta} \mathbf{B})=\boldsymbol{\theta}^H(\mathbf{A}\odot\mathbf{B}^T)\boldsymbol{\theta}$ and $\text{Tr}(\boldsymbol{\Theta}\mathbf{A})=\text{diag}(\mathbf{A})^T\boldsymbol{\theta}$. We simplify \eqref{eq:obj2} using these two equations. The minimization with respect to $\boldsymbol{\theta}$ is as
\begin{align}\hspace{-.2cm}
	\min_{\boldsymbol{\theta}} \quad & \hspace{-.2cm}\sum_{k=1}^{K}\sum_{s\in \mathcal{S}_k}\sum_{i=1}^{2} \bigg[ \boldsymbol{\theta}^H(\boldsymbol{\Upsilon}_{hs,k}^{i,s}\odot\boldsymbol{\Gamma}_{hs}^{i,s^T})\boldsymbol{\theta} + \boldsymbol{\theta}^T(\boldsymbol{\Upsilon}_{ts,k}^{i,s}\odot\boldsymbol{\Gamma}_{ts}^{i,s^T})\boldsymbol{\theta} \nonumber \\
	& \hspace{.05cm} + \boldsymbol{\theta}^H(\boldsymbol{\Upsilon}_{hc,k}^{i,s}\odot\boldsymbol{\Gamma}_{hc}^{i,s^T})\boldsymbol{\theta}^* + \boldsymbol{\theta}^T(\boldsymbol{\Upsilon}_{tc,k}^{i,s}\odot\boldsymbol{\Gamma}_{tc}^{i,s^T})\boldsymbol{\theta}^* \bigg] \nonumber \\
	& + 2\mathfrak{Re}(\,\mathrm{diag}(\boldsymbol{\Phi}_s)^T\boldsymbol{\theta}) \label{opt:diag_compact} \\
	\textrm{s.t.} \quad & \eqref{eq:ampin}.\nonumber
\end{align}
Based on \eqref{eq:upsilon1}--\eqref{eq:upsilon4}, we define $\boldsymbol{\Sigma}_{\mu} = \sum_{k=1}^{K}\sum_{s\in \mathcal{S}_k}\boldsymbol{\Upsilon}_{\mu,k}^{2,s}\odot(\boldsymbol{\Gamma}_{\mu}^{1,s}+\boldsymbol{\Gamma}_{\mu}^{2,s})^T$ for $\mu \in \{hs, hc, ts, tc\}$. We rewrite $\boldsymbol{\theta}$ in the complex format as $\mathbf{a}^r + \iota \mathbf{a}^i, \mathbf{a}^r ,\mathbf{a}^i\in \mathbb{R}^{QM\times 1}$. 
	Then, \eqref{opt:diag_compact} becomes  
	\begin{subequations}\label{opt:xy}
		\begin{align}
			\max_{\mathbf{a}^r,\mathbf{a}^i} \quad & \hspace{0cm}(\mathbf{a}^r+\iota \mathbf{a}^i)^H\boldsymbol{\Sigma}_{hs}(\mathbf{a}^r+\iota\mathbf{a}^i)+(\mathbf{a}^r
			+\iota \mathbf{a}^i)^T\boldsymbol{\Sigma}_{ts}(\mathbf{a}^r+\iota\mathbf{a}^i)\nonumber\\
			&\hspace{-0.8cm}+(\mathbf{a}^r
			+\iota \mathbf{a}^i)^H\boldsymbol{\Sigma}_{hc}(\mathbf{a}^r+\iota\mathbf{a}^i)^*+(\mathbf{a}^r
			+\iota \mathbf{a}^i)^T\boldsymbol{\Sigma}_{tc}(\mathbf{a}^r+\iota\mathbf{a}^i)^*\nonumber\\
			& \hspace{-0.8cm} +2\mathfrak{Re}(\text{diag}(\boldsymbol{\Phi}_s)^T(\mathbf{a}^r+\iota \mathbf{a}^i))\label{eq:xy}\\
			\textrm{s.t.} \quad &   \sqrt{([\mathbf{a}^r]_{\ell})^2+  ([\mathbf{a}^i]_{\ell})^2} \leq 1, \hspace{.1cm}\forall \ell .\label{eq:tran}
		\end{align}
	\end{subequations}
Next, we propose an algorithm to solve \eqref{opt:xy} globally.

	\subsection{The Proposed BCD to Optimize RIS Elements}

	We reformulate \eqref{opt:xy}, a quadratically constrained quadratic program (QCQP),  in real-variable format for the purpose of algorithm development.  We define $\boldsymbol{\nu}=[\mathbf{a}^{r^T},\mathbf{a}^{i^T}]^T$. Based on this, it is easy to show that \eqref{eq:xy} can be equivalently written as $\boldsymbol{\nu}^T\boldsymbol{\Delta}\boldsymbol{\nu}+2\boldsymbol{\omega}^T\boldsymbol{\nu}$, where $\boldsymbol{\Delta}= \mathbf{T}(\boldsymbol{\Sigma}_{hs},1,-1,1,1)+\mathbf{T}(\boldsymbol{\Sigma}_{tc},1,1,-1,1)+\mathbf{T}(\boldsymbol{\Sigma}_{hc},1,1,1,-1)+\mathbf{T}(\boldsymbol{\Sigma}_{ts},1,-1,-1,-1)$, $\boldsymbol{\omega}=[\mathfrak{Re}(\text{diag}(\boldsymbol{\Phi}_s));-\mathfrak{Im}(\text{diag}(\boldsymbol{\Phi}_s))]$, where the operator $\mathbf{T}(\cdot)$ is defined as
	\begin{align}
		\mathbf{T}(\boldsymbol{\Sigma},e_1,e_2,e_3,e_4) = \begin{pmatrix}
			e_1\mathfrak{Re}(\boldsymbol{\Sigma}) & e_2\mathfrak{Im}(\boldsymbol{\Sigma})\\
			e_3\mathfrak{Im}(\boldsymbol{\Sigma}) & e_4\mathfrak{Re}(\boldsymbol{\Sigma})
		\end{pmatrix}.\nonumber
	\end{align}
	Moreover, \eqref{eq:tran} is rewritten as $\sqrt{[\boldsymbol{\nu}]_\ell^2+ [\boldsymbol{\nu}]_{\ell+QM}^2} \leq 1$, which is separable across RIS elements indices $\{\ell, \ell+QM\}$.  
		
The optimization problem is a QCQP, $\min_{\boldsymbol{\nu}} \boldsymbol{\nu}^T\boldsymbol{\Delta}\boldsymbol{\nu}+2\boldsymbol{\omega}^T\boldsymbol{\nu}$, with $QM$ separable unit-ball constraints of the form $\|[[\boldsymbol{\nu}]_\ell, [\boldsymbol{\nu}]_{\ell+QM}]^T\|_2 \leq 1$. This structure is ideal for a BCD algorithm that cyclically optimizes each block $\boldsymbol{\nu}_\ell = [[\boldsymbol{\nu}]_\ell, [\boldsymbol{\nu}]_{\ell+QM}]^T$ while keeping other variables fixed. For each block $\ell$, the subproblem is to solve $\min_{\boldsymbol{\nu}_\ell} \boldsymbol{\nu}_\ell^T\boldsymbol{\Delta}_\ell\boldsymbol{\nu}_\ell + 2\boldsymbol{\omega}_\ell^T\boldsymbol{\nu}_\ell$ subject to $\|\boldsymbol{\nu}_\ell\|_2 \leq 1$. Here, $\boldsymbol{\Delta}_\ell$ is the $2 \times 2$ submatrix with $[\boldsymbol{\Delta}_\ell]_{1,1} = [\boldsymbol{\Delta}]_{\ell,\ell}$, $[\boldsymbol{\Delta}_\ell]_{1,2} = [\boldsymbol{\Delta}]_{\ell,\ell+QM}$, $[\boldsymbol{\Delta}_\ell]_{2,1} = [\boldsymbol{\Delta}]_{\ell+QM,\ell}$, and $[\boldsymbol{\Delta}_\ell]_{2,2} = [\boldsymbol{\Delta}]_{\ell+QM,\ell+QM}$, and $[\boldsymbol{\omega}_\ell]_1 = \sum_{j \neq \ell, \ell+QM} [\boldsymbol{\Delta}]_{\ell,j} [\boldsymbol{\nu} ]_j + [\boldsymbol{\omega}]_\ell$ and $[\boldsymbol{\omega}_\ell]_2 = \sum_{j \neq \ell, \ell+QM} [\boldsymbol{\Delta}]_{\ell+QM,j} [\boldsymbol{\nu} ]_j+ [\boldsymbol{\omega}]_{\ell+QM}$ account for cross-coupling to fixed variables and the linear term. This 2D subproblem has a closed-form solution: if the unconstrained minimum $\boldsymbol{\nu}_\ell^* = -\boldsymbol{\Delta}_\ell^{-1}\boldsymbol{\omega}_\ell$ is feasible (i.e., $\|\boldsymbol{\nu}_\ell^*\|_2 \leq 1$), it is the solution. Otherwise, the solution is its projection onto the unit ball, $\boldsymbol{\nu}_\ell^*/\|\boldsymbol{\nu}_\ell^*\|_2$. We alternatively optimize different blocks $\ell$ until the convergence of this BCD method is attained. Due to the problem convexity, the proposed BCD is globally convergent \cite{bertsekas1999nonlinear}.

	\begin{figure*}
		\centering 
		\begin{minipage}[t!]{.3\textwidth} 
			\includegraphics[width=\textwidth]{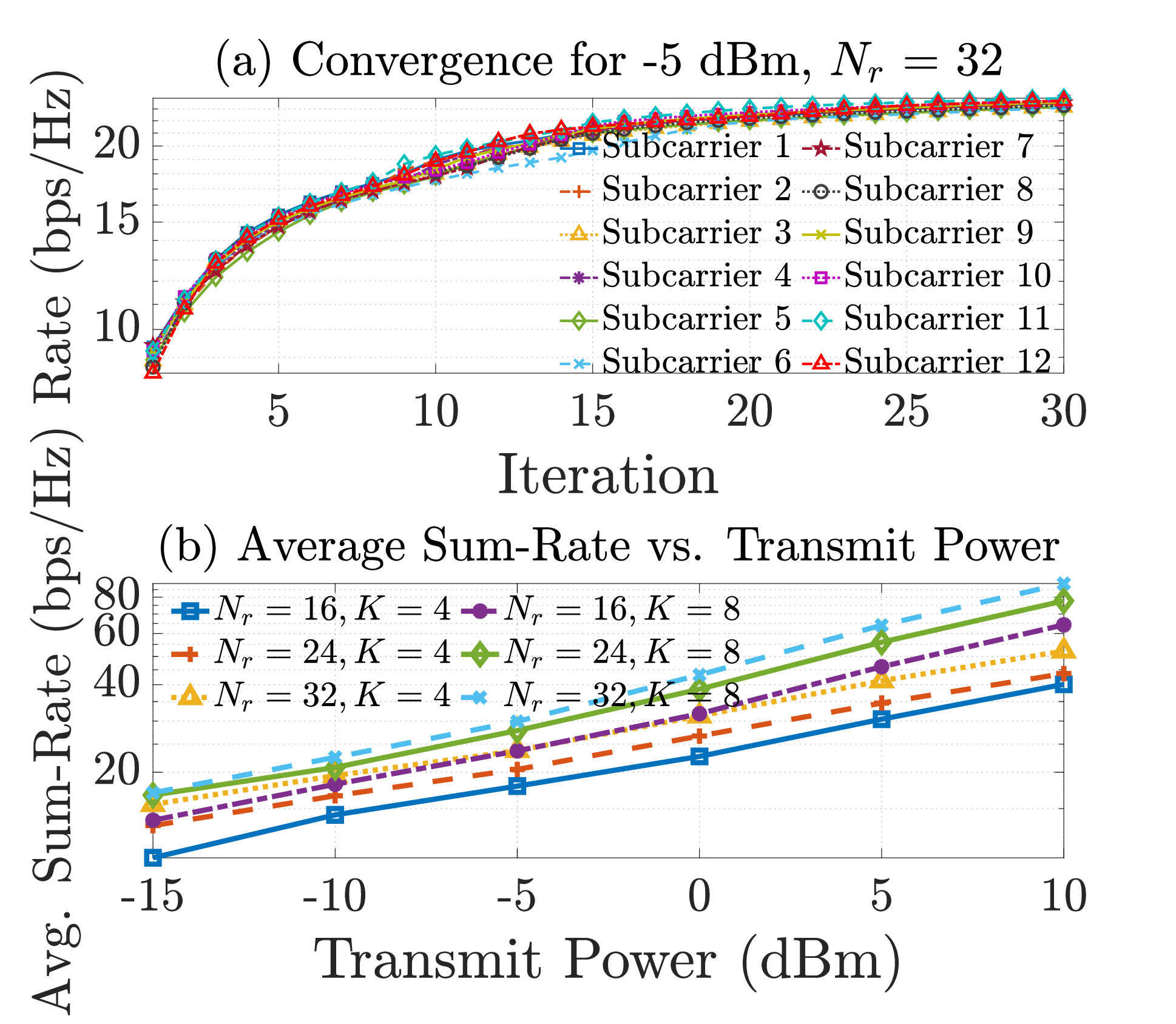}
			\caption{}
			\label{fig:conv}
		\end{minipage}
		\hfill 
		\begin{minipage}[t!]{.3\textwidth} 
			\includegraphics[width=\textwidth]{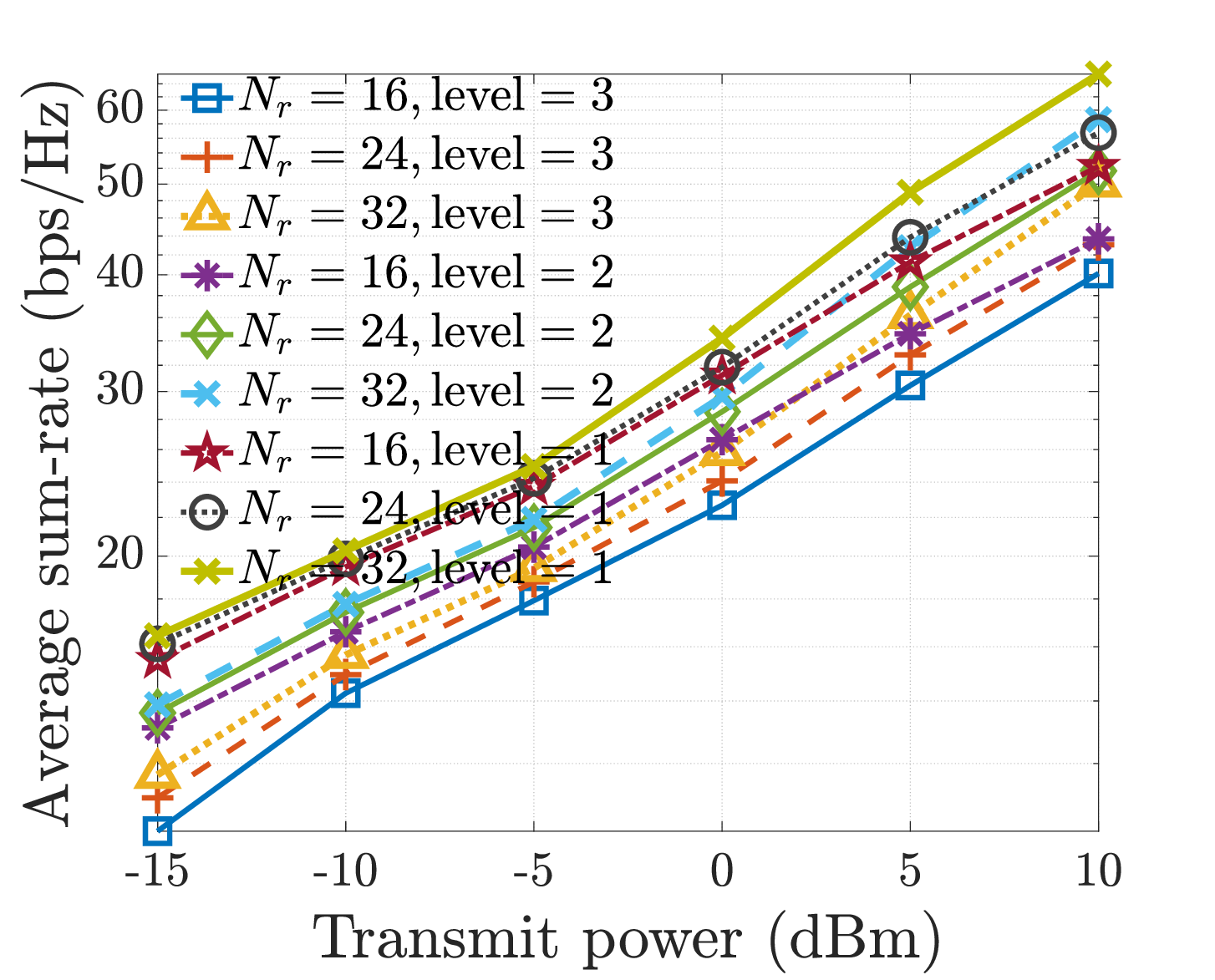}
			\caption{}
			\label{fig:sumrate}
		\end{minipage}
		\hfill 
		\begin{minipage}[t!]{.3\textwidth} 
			\includegraphics[width=\textwidth]{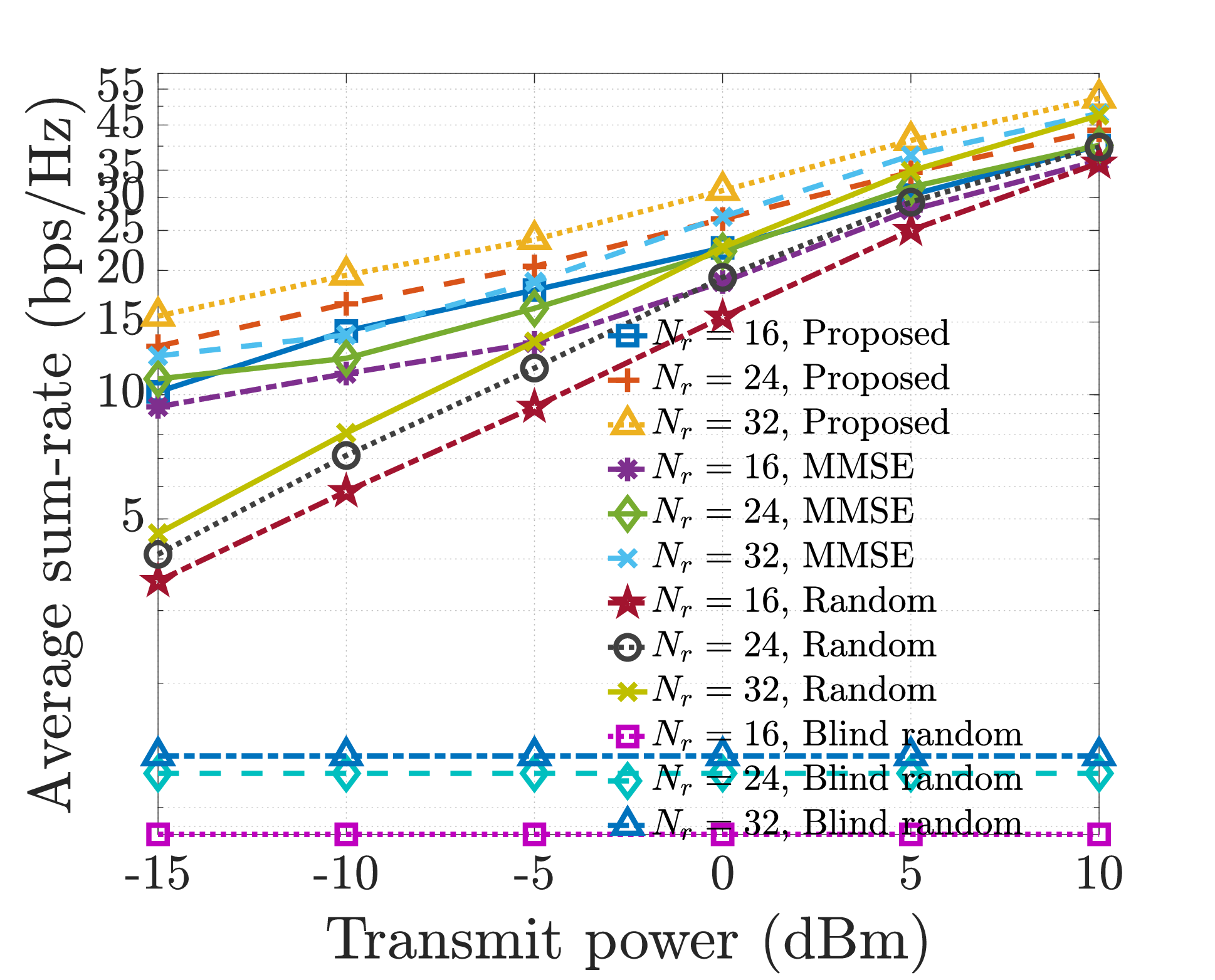}
			\caption{}\label{fig:figure3}
		\end{minipage}
		\caption*{Fig. 1a: The algorithm convergence. Fig. 1b: Average per-subcarrier sum-rate for $4$ and $8$ UEs. Fig. 2: The average per-subcarrier sum-rate with different IQI severities. Fig. 3: The average per-subcarrier sum-rate obtained by different methods.}
	\end{figure*}
	
	\subsection{The Proposed Overall Iterative BCD Approach} 
	Different blocks $\mathbf{U}_k^{s}$,  $\mathbf{W}_k^{s}$, $\mathbf{V}_k^{s}$, and $\boldsymbol{\Theta}$ are initialized. Sequentially, a block is chosen in a round-robin fashion to be updated, while other blocks are fixed. In the ($q$+1)$^{\text{th}}$ iteration of the proposed BCD method, a) $\mathbf{U}_k^{s,q+1}$ is updated with  $\{\mathbf{W}_k^{s,q},\mathbf{V}_k^{s,q},\boldsymbol{\Theta}^q\}$; b) $\mathbf{W}_k^{s,q+1}$ is updated with $\{\mathbf{U}_k^{s,q+1},\mathbf{V}_k^{s,q},\boldsymbol{\Theta}^q\}$; c) $\mathbf{V}_k^{s,q+1}$ is updated using \eqref{eq:V} with $\{\mathbf{U}_k^{s,q+1},\mathbf{W}_k^{s,q+1},\boldsymbol{\Theta}^q\}$; d) $\boldsymbol{\Theta}^{q+1}$ is updated using the proposed BCD with $\{\mathbf{U}_k^{s,q+1},\mathbf{W}_k^{s,q+1},\mathbf{V}_k^{s,q+1}\}$. We keep updating different blocks until all blocks converge. Due to convexity with respect to each block, the sequential updates of BCD converges to a stationary solution \cite[Prop. 3.7.1]{bertsekas1999nonlinear}.
	
	\section{Simulation Results}

	Four APs are at $(-30,-30,3)$, $(-30,30,3)$, $(30,-30,3)$, $(30,30,3)\,$m, two RISs are at $(-20,50,10)$, $(20,50,10)\,$m, and UEs are randomly distributed within a circle centered at $(0,350,1.5)\,$m with radius $30\,$m. The MIMO-OFDM system uses 12 subcarriers. The uplink channel from UE $k$ to the $j^{\text{th}}$ RIS on subcarrier $s$ is $\mathbf{H}_{j,k}^s = \sqrt{\frac{\kappa_{j,k}}{\kappa_{j,k}+1}} \sqrt{g_{j,k}} \bar{\mathbf{a}}_r(f^c ,\theta_{j,k}^{\text{AoA}}) \bar{\mathbf{a}}_t^H(f^c ,\theta_{j,k}^{\text{AoD}}) + \sum_{\ell=0}^{T-1} \sum_{p=1}^{P_\ell} \sqrt{\frac{g_{j,k,\ell,p}}{\kappa_{j,k}+1}}  \bar{\mathbf{a}}_r(f^c ,\theta_{j,k,\ell,p}^{\text{AoA}}) \bar{\mathbf{a}}_t^H(f^c ,\theta_{j,k,\ell,p}^{\text{AoD}}) h_{j,k,\ell,p}$\\$\times e^{-j2\pi\ell s/S}$, where $\bar{\mathbf{a}}_t(\cdot)$ and $\bar{\mathbf{a}}_r(\cdot)$ represent the transmit and receive array response vectors, respectively, and $f^c$ is the carrier frequency. Other channels follow the same Rician model. The UE and AP array responses follow uniform linear array (ULA) models, while the RIS response follows a uniform planar array (UPA) model, all with half-wavelength spacing. Here, $\theta_{j,k}^{\text{AoD}}$ and $\theta_{j,k}^{\text{AoA}}$ are the angles of departure and arrival, respectively. For NLOS components, the angles are randomly distributed around the main LOS angles. Per channel tap $\ell$, we have $P_\ell=4$ paths. The system has $M=64$ RIS elements, $h_{j,k,\ell,p} \sim \mathcal{CN}(0,1)$, $T=4$, and Rician factor $\kappa_{j,k} = 10$. LOS path loss follows 3GPP as $
	PL_{\text{LOS}}^{\text{dB}} = 22 \log_{10}(d_k) + 28 + 20 \log_{10}(f^c)+ \xi_{\text{LOS}},$
	where $d_k$ is the distance in meters, $f^c$ is the carrier frequency in GHz, with $f^c = 28$ GHz, $\xi_{\text{LOS}} \sim \mathcal{N}(0, 5.8^2)$, $g_{j,k} = 10^{-PL_{\text{LOS}}^{\text{dB}}/10}$, $g_{j,k,\ell,p} = 10^{-PL_{\text{NLOS}}^{\text{dB}}/10}$ \cite{3GPP_TR_36_814_2017}. NLOS path loss is derived similarly with $\xi_{\text{NLOS}} \sim \mathcal{N}(0, 8^2)$. The system is configured with a bandwidth of $B = 180$~kHz with a spacing of $\Delta f = 15$~kHz. We investigate three distinct levels of IQI, where parameters are drawn from uniform distributions. For level~1, the amplitude and phase imbalance ranges are $\mathcal{U}[0.9, 1.1]$ and $\mathcal{U}[-10^\circ, 10^\circ]$; for level~2, the ranges are $\mathcal{U}[0.8, 1.2]$ and $\mathcal{U}[-20^\circ, 20^\circ]$; and for level~3, they are $\mathcal{U}[0.7, 1.3]$ and $\mathcal{U}[-30^\circ, 30^\circ]$.
	
 In Fig. \ref{fig:conv}a, we illustrate the convergence of the proposed algorithm with four UEs and level 3 IQI.	Notably, the algorithm converges after approximately 30 iterations. One hundred fifty random channels and IQI instances are generated and the sum-rate is averaged over different realizations. With level 3 IQI, the average achievable per-subcarrier sum-rate is shown in Fig. \ref{fig:conv}b. The simulation varies the maximum transmit power of each UE from $-15$ dBm to $10$ dBm in steps of $5$ dBm, considering both four and eight UEs. Furthermore, the simulation explores the impact of increasing the number of receive antennas per AP from $16$ to $32$. Notably, the average achievable rate increases with both more antennas and higher transmission power.
	In our simulations, we observed that, after the proposed algorithm converges, $|[\boldsymbol{\Theta}^j]_{m,m}|$ tends toward $1$. This observation of the algorithm's behavior is noteworthy because directly imposing $|[\boldsymbol{\Theta}^j]_{m,m}|=1$ in the problem formulation makes it more difficult to solve. 
	
	With 4 UEs and varying numbers of antennas, Fig. \ref{fig:sumrate} shows the average per-subcarrier sum-rate. The average per-subcarrier sum-rate decreases with increasing IQI severity. Fig. \ref{fig:figure3} evaluates the proposed joint optimization, referred to as ``Proposed'', by comparing its performance under level 3 IQI against three benchmark approaches. The first approach, ``MMSE'', sets $\mathbf{W}_k^s= \mathbf{I}_{b_k\times b_k}$, reducing the proposed method to MMSE precoding/combining. The second approach, ``Random'', is IQI-aware but uses randomly assigned, fixed RIS element values. The third method, ``Blind'', applies the conventional uplink WMMSE approach while neglecting IQI effects at both UEs and APs with random fixed RIS element values. From Fig. \ref{fig:figure3}, we observe that the proposed scheme results in a higher per-subcarrier average sum-rate, particularly at lower transmit powers.

	\section{Concluding Remarks}
	Joint precoding-combining design and multi-RIS reflection coefficient optimization for the uplink of CF-mMIMO-OFDM systems in the presence of IQI at UEs and APs were studied where uplink data is transmitted through a direct link and multiple RISs. The UE sum-rate maximization was formulated and a WMMSE-based BCD method was proposed to iteratively solve a sequence of convex problems, where efficient, low-cost methods were proposed to solve BCD subproblems globally. Extensive simulation results were given to demonstrate the proposed approaches performance.

	\ifCLASSOPTIONcaptionsoff
	\newpage
	\fi

	\bibliographystyle{IEEEbib}
	\bibliography{ref_SDRA}

\begin{thebibliography}{10}

\bibitem{10071534}
R.~Zayani, J.-B. Doré, B.~Miscopein, and D.~Demmer,
\newblock ``Local {PAPR}-aware precoding for energy-efficient cell-free massive
  {MIMO-OFDM} systems,''
\newblock {\em IEEE Trans. Green Commun. Netw.}, vol. 7, no. 3, pp. 1267--1284,
  Mar. 2023.

\bibitem{8901451}
S.~Buzzi, C.~D’Andrea, A.~Zappone, and C.~D’Elia,
\newblock ``User-centric {5G} cellular networks: Resource allocation and
  comparison with the cell-free massive {MIMO} approach,''
\newblock {\em IEEE Trans. Wireless Commun.}, vol. 19, no. 2, pp. 1250--1264,
  Nov. 2019.

\bibitem{9656609}
Q.~Tao, Zhang S, C.~Zhong, W.~Xu, H.~Lin, and Z.~Zhang,
\newblock ``Weighted sum-rate of intelligent reflecting surface aided multiuser
  downlink transmission with statistical {CSI},''
\newblock {\em IEEE Trans. Wireless Commun.}, vol. 21, no. 7, pp. 4925--4937,
  Dec. 2022.

\bibitem{9392378}
J.~Zhang, J.~Liu, S.~Ma, C.-K. Wen, and S.~Jin,
\newblock ``Large system achievable rate analysis of {RIS}-assisted {MIMO}
  wireless communication with statistical {CSIT},''
\newblock {\em IEEE Trans. Wireless Commun.}, vol. 20, no. 9, pp. 5572--5585,
  Mar. 2021.

\bibitem{9530675}
K.~Xu, J.~Zhang, X.~Yang, S.~Ma, and G.~Yang,
\newblock ``On the sum-rate of {RIS}-assisted {MIMO} multiple-access channels
  over spatially correlated rician fading,''
\newblock {\em IEEE Trans. Commun.}, vol. 69, no. 12, pp. 8228--8241, Sep.
  2021.

\bibitem{9668928}
R.~Mahendra, S.K. Mohammed, and R.K. Mallik,
\newblock ``Downlink multi-user hybrid beamforming systems with receiver {IQ}
  imbalance: Pre-compensation and performance analysis,''
\newblock {\em IEEE Commun. Lett.}, vol. 26, no. 4, pp. 927--931, Apr. 2022.

\bibitem{10000842}
Z.~Liu, J.~Zhang, Z.~Wang, B.~Ai, and D.W.K. Ng,
\newblock ``Cell-free massive {MIMO} with low-resolution {ADCs} and {I/Q}
  imbalance over spatially correlated channels,''
\newblock in {\em IEEE Global Commun. Conf.}, Dec. 2022, pp. 2450--2455.

\bibitem{10044153}
Z.~Liu, J.~Zhang, Z.~Wang, X.~Zhang, H.~Xiao, and B.~Ai,
\newblock ``Cell-free massive {MIMO} with mixed-resolution {ADCs} and {I/Q}
  imbalance over {Rician} spatially correlated channels,''
\newblock {\em IEEE Trans. Veh. Technol.}, vol. 72, no. 7, pp. 9567--9572, Feb.
  2023.

\bibitem{9348619}
R.~Mahendra, S.K. Mohammed, and R.K. Mallik,
\newblock ``Compensation of receiver {IQ} imbalance in mm-wave hybrid
  beamforming systems,''
\newblock in {\em EEE Veh. Technol. Conf. (VTC)}, Dec. 2020, pp. 1--6.

\bibitem{11114797}
W.~Zhao, C.~Han, H.-J. Song, and E.~Björnson,
\newblock ``{DNN} based two-stage compensation algorithm for {THz} hybrid
  beamforming with imperfect hardware,''
\newblock {\em IEEE Transactions on Wireless Communications}, pp. 1--1, 2025.

\bibitem{7577860}
S.~Wang and L.~Zhang,
\newblock ``Signal processing in massive {MIMO} with {IQ} imbalances and
  low-resolution {ADCs},''
\newblock {\em IEEE Trans. Wireless Commun.}, vol. 15, no. 12, pp. 8298--8312,
  Dec. 2016.

\bibitem{9459505}
Z.~Zhang and L.~Dai,
\newblock ``A joint precoding framework for wideband reconfigurable intelligent
  surface-aided cell-free network,''
\newblock {\em IEEE Trans. Signal Process.}, vol. 69, pp. 4085--4101, Jun.
  2021.

\bibitem{9133184}
S.~Lin, B.~Zheng, G.C. Alexandropoulos, M.~Wen, F.~Chen, and S.~Mumtaz,
\newblock ``Adaptive transmission for reconfigurable intelligent
  surface-assisted {OFDM} wireless communications,''
\newblock {\em IEEE J. Sel. Areas Commun.}, vol. 38, no. 11, pp. 2653--2665,
  Nov. 2020.

\bibitem{4024126}
T.C.W. Schenk, P.F.M. Smulders, and E.R. Fledderus,
\newblock ``Estimation and compensation of frequency selective{ TX/RX IQ}
  imbalance in {MIMO} {OFDM} systems,''
\newblock in {\em Proc. IEEE Int.l Conf. Commun.}, Jun. 2006, vol.~1, pp.
  251--256.

\bibitem{9737367}
Z.~{Wang}, J.~Zhang, B.~Ai, C.~Yuen, and M.~Debbah,
\newblock ``Uplink performance of cell-free massive {MIMO} with multi-antenna
  users over jointly-correlated {R}ayleigh fading channels,''
\newblock {\em IEEE Trans. Wireless Commun.}, vol. 21, no. 9, pp. 7391--7406,
  Sept. 2022.

\bibitem{5756489}
Q.~Shi, M.~Razaviyayn, Z.-Q. Luo, and C.~He,
\newblock ``An iteratively weighted {MMSE} approach to distributed sum-utility
  maximization for a {MIMO} interfering broadcast channel,''
\newblock {\em IEEE Trans.n Signal Process.}, vol. 59, no. 9, pp. 4331--4340,
  Sept. 2011.

\bibitem{bertsekas1999nonlinear}
D.~P. {Bertsekas},
\newblock {\em Nonlinear Programming},
\newblock Athena Scientific, 3rd ed., 2016.

\bibitem{3GPP_TR_36_814_2017}
{3GPP TR 36.814},
\newblock ``Further advancements for {E-UTRA} physical layer aspects,''
\newblock Tech. {R}ep., 3rd Generation Partnership Project (3GPP), Mar. 2017,
\newblock Release 9, V9.2.0.

\end{thebibliography}
	\newpage

\end{document}